  \providecommand\BibTeX{{%
    \normalfont B\kern-0.5em{\scshape i\kern-0.25em b}\kern-0.8em\TeX}}}
\newcommand\numberthis{\addtocounter{equation}{1}\tag{\theequation}}
\newtheorem{theorem}{Theorem}
\newtheorem{lemma}{Lemma}
\newtheorem{assumption}{Assumption}[section]
\begin{document}

%%
%% The "title" command has an optional parameter,
%% allowing the author to define a "short title" to be used in page headers.
\title{DCAF: A Dynamic Computation Allocation Framework for Online Serving System}

%%
%% The "author" command and its associated commands are used to define
%% the authors and their affiliations.
%% Of note is the shared affiliation of the first two authors, and the
%% "authornote" and "authornotemark" commands
%% used to denote shared contribution to the research.

\author{Biye Jiang\*, Pengye Zhang\*, Rihan Chen}
\authornote{These three authors contributed equally, corresponding author: Guorui Zhou <guorui.xgr@alibaba-inc.com>}
\author{Binding Dai, Xinchen Luo, Yin Yang, Guan Wang, Guorui Zhou, Xiaoqiang Zhu, Kun Gai}
\affiliation{\institution{Alibaba Group}}

%%
%% By default, the full list of authors will be used in the page
%% headers. Often, this list is too long, and will overlap
%% other information printed in the page headers. This command allows
%% the author to define a more concise list
%% of authors' names for this purpose.
\renewcommand{\shortauthors}{Jiang, Zhang and Chen, et al.}

%%
%% The abstract is a short summary of the work to be presented in the
%% article.
\begin{abstract}

Modern large-scale systems such as recommender system and online advertising system are built upon computation-intensive infrastructure. The typical objective in these applications is to maximize the total revenue, e.g. GMV~(Gross Merchandise Volume), under a limited computation resource. Usually, the online serving system follows a multi-stage cascade architecture, which consists of several stages including retrieval, pre-ranking, ranking, etc. These stages usually allocate resource manually with specific computing power budgets, which requires the serving configuration to adapt accordingly. As a result, the existing system easily falls into suboptimal solutions with respect to maximizing the total revenue. The limitation is due to the face that, although the value of traffic requests vary greatly, online serving system still spends equal computing power among them. %To achieve the optimal solution, we highlight that the platform should consider each request's value beforehand and spend their computation budget in a more reasonable way where the expense should decided by the request value.

In this paper, we introduce a novel idea that online serving system could treat each traffic request differently and allocate "personalized" computation resource based on its value. We formulate this resource allocation problem as a knapsack problem and propose a Dynamic Computation Allocation Framework~(DCAF). Under some general assumptions, DCAF can theoretically guarantee that the system can maximize the total revenue within given computation budget. DCAF brings significant improvement and has been deployed in the display advertising system of Taobao for serving the main traffic. With DCAF, we 
are able to maintain the same business performance with 20\% computation resource reduction.

\end{abstract}

%%
%% The code below is generated by the tool at http://dl.acm.org/ccs.cfm.
%% Please copy and paste the code instead of the example below.
%%

%%
%% Keywords. The author(s) should pick words that accurately describe
%% the work being presented. Separate the keywords with commas.
\keywords{Dynamic Computation Allocation, Online Serving System}

%% A "teaser" image appears between the author and affiliation
%% information and the body of the document, and typically spans the
%% page.

%%
%% This command processes the author and affiliation and title
%% information and builds the first part of the formatted document.
\maketitle

\section{Introduction}

Modern large-scale systems such as recommender system and online advertising are built upon computation-intensive infrastructure \cite{cheng2016wide}  \cite{zhou2018deep} \cite{zhou2019deep}. With the popularity of e-commerce shopping, e-commerce platform such as Taobao, the world's leading e-commerce platforms,  are now enjoying a huge boom in traffic \cite{cardellini1999dynamic}, e.g. user requests at Taobao are increasing year by year. As a result, the system load is under great pressures \cite{zhou2018rocket}. Moreover, request fluctuation also gives critical challenge to online serving system. For example, the Taobao recommendation system always bears many spikes of requests during the period of Double 11 shopping festival. 

\begin{figure}[thb]
\includegraphics[width=\columnwidth] {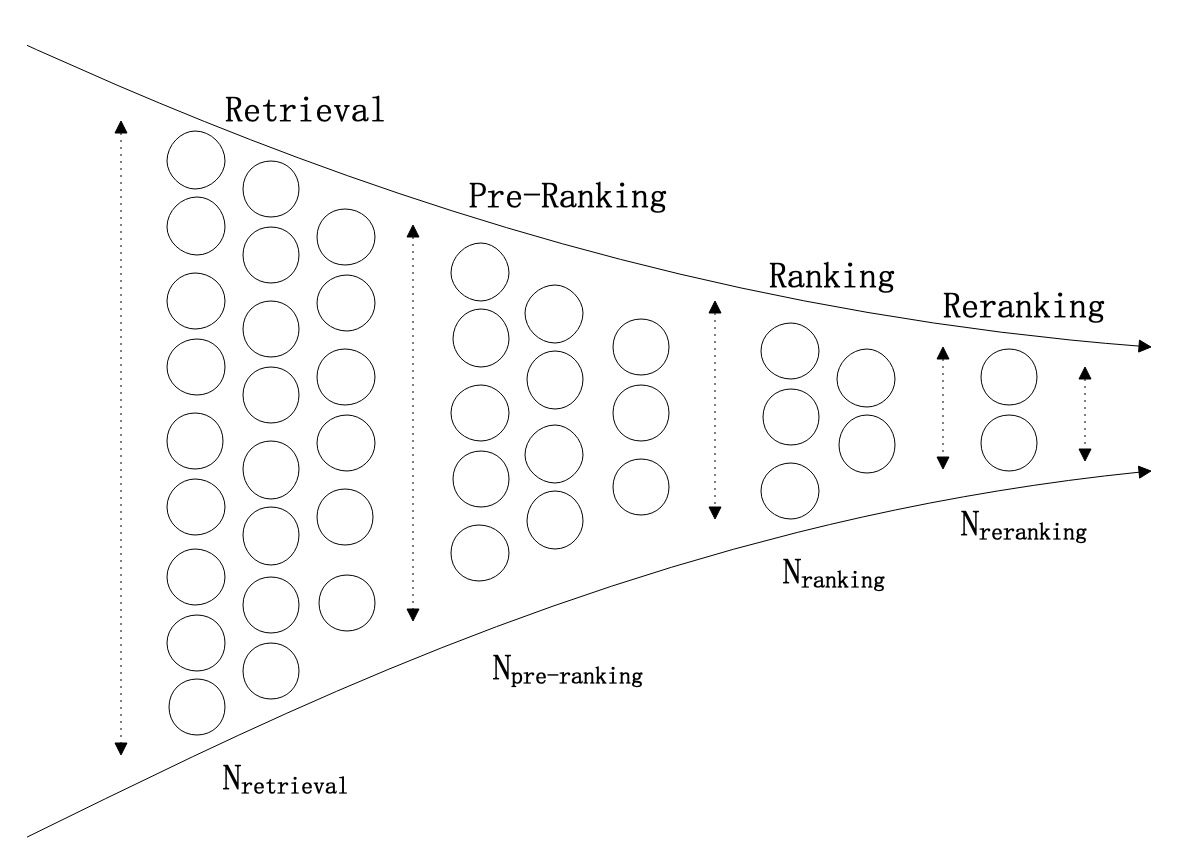}
\caption{\label{fig:cascade} Illustration of our cascaded display advertising system. Each request will be served through these modules sequentially. Considering the limitation of computation resource and latency for online serving system, the fixed quota of candidate advertisements, denoted by $N$ for each module, is usually pre-defined manually by experience. }
\medskip
\small
\end{figure}

To address the above challenges, the prevailing practices for online engine are: 1) decomposing the cascade system \cite{liu2017cascade} into multiple modules and manually allocating a fixed quota for each module by experience, as shown in Figure~\ref{fig:cascade}; 2) designing many computation downgrade plans in case the sudden traffic spikes arrive and manually executing these plans when needed. 

These non-automated strategies are often lack of flexibility and require human interventions. Furthermore, most of these practices often impact on all requests once they are executed and ignore the fact that the value of requests varies greatly. Obviously it is a straightforward but better strategy to allocate the computation resource by biasing towards the requests that are more valuable than others, for maximizing the total revenue. 

Considering the shortcomings of existing works, we aim at building a dynamic allocation framework that can allocate the computation budget flexibly among requests. Moreover, this framework should also take into account the stability of the online serving system which are frequently challenged by request boom and spike. Specifically, we formulate the problem as a knapsack problem, of which objective is to maximize the total revenue under a computation budget constraint. We propose a dynamic allocation framework named DCAF which could consider both computation budget allocation and stability of online serving system simultaneously and automatically. 

Our main contributions are summarized as follow:

\begin{itemize}
    \item We break through the stereotypes in most cascaded systems where each individual module is limited by a static computation budget independently. We introduce a brand-new idea that computation budget can be allocated w.r.t the value of traffic requests in a "personalized" manner. 
    \item  We propose a dynamic allocation framework DCAF which could guarantee in theory that the total revenue can be maximized under a computation budget constraint. Moreover, we provide an powerful control mechanism that could always keep the online serving system stable when encountering the sudden spike of requests.
    \item DCAF has been deployed in the display advertising system of Taobao, bringing a notable improvement. To be specific, system with DCAF maintains the same business performance with 20\% GPU (Graphics Processing Unit) resource reduction for online ranking system. Meanwhile, it greatly boosts the online engine's stability.
    \item By defining the new paradigm, DCAF lays the cornerstone for jointly optimizing the cascade system among different modules and raising the ceiling height for performance of online serving system further.
\end{itemize}

\section{Related work}
Quite a lot of research have been focusing on improving the serving performance. Park et al. \cite{park2018deep} describes practice of serving deep learning models in Facebook. Clipper \cite{crankshaw2017clipper} is a general-purpose low-latency prediction serving system. Both of them use latency, accuracy, and throughput as the optimization target of the system. They also mentioned techniques like batching, caching, hyper-parameters tuning, model selection, computation kernel optimization to improve overall serving performance. Also, many research and system use model compression \cite{han2015deep}, mix-precision inference, quantization \cite{gupta2015deep,courbariaux2015binaryconnect}, kernel fusion \cite{chen2018tvm}, model distillation \cite{hinton2015distilling, zhou2018rocket} to accelerate deep neural net inference.

Traditional work usually focus on improving the performance of individual blocks, and the overall serving performance across all possible queries. Some new systems have been designed to take query diversity into consideration and provide dynamic planning. DQBarge \cite{chow2016dqbarge} is a proactive system using monitoring data to make data quality tradeoffs. RobinHood \cite{berger2018robinhood} provides tail Latency aware caching to dynamically allocate cache resources. Zhang et al. \cite{zhang2019e2e} takes user heterogeneity into account to improve quality of experience on the web. Those systems provide inspiring insight into our design, but existing systems did not provide solutions for computation resource reduction and comprehensive study of personalized planning algorithms.

\section{Formulation}
We formulate the dynamic computation allocation problem as a knapsack problem which is aimed at maximizing the total revenue under the computation budget constraint. We assume that there are $N$ requests $\{i=1,\dots,N\}$ requesting the e-commerce platform within a time period. For each request, $M$ actions $\{j=1,\dots,M\}$ can be taken. We define  $Q_{ij}$ and $q_j$ as the expected gain for request $i$ that is assigned action $j$ and the cost for action $j$ respectively. $C$ represents the total computation budget constraint within a time period. For 
instance, in the display advertising system deployed in e-commerce, $q_j$ usually stands for items (ads) quota that request the online engine to evaluate, which positively correlate with system load in usual. And $Q_{ij}$ usually represent the eCPM~(effective cost per mille) conditioned on action $j$ which directly proportional to  $q_j$. $x_{ij}$ is the indicator that request $i$ is assigned action $j$. For each request, there is one and only one action $j$ can be taken, in other words, $x_{i.}$ is an one-hot vector. \newline
Following the definitions above, for each request, our target is to maximize the total revenue under computation budget by assigning each request $i$ with appropriate action $j$. Formally,
\begin{align*}
{\rm \max_j}&\sum_{ij}{x_{ij}Q_{ij}} \\
     &{\rm s.t.} \sum_{ij}{x_{ij}q_{j}} \le C \\
     &\sum_j{x_{ij}} \le 1 \\
     &x_{ij} \in \{0,1\} \numberthis
\end{align*}
where we assume that each individual request has its "personalized" value, thus should be treated differently. Besides, request expected gain is correlated with action $j$ which will be automatically taken by the platform in order to maximize the objective under the constraint. In this paper, we mainly focus on proving the effectiveness of DCAF's framework as a whole. However, in real case, we are faced with several challenges which are beyond the scope of this paper. We simply list them as below for considering in the future:
\begin{itemize}
    \item The dynamic allocation problem \cite{berger2018robinhood} are usually coupled with real-time request and system status. As the online traffic and system status are both varying with time, we should consider the knapsack problem to be real-time, e.g. real-time computation budget.
    \item $Q_{ij}$ are unknown, thus needs to be estimated. $Q_{ij}$ prediction is vital to maximize the objective, which means real-time and efficient approaches are required to estimate the value. Besides, to avoid increasing the system's burden, it is essential for us to consider light-weighted methods.
\end{itemize}

\section{Methodology}
\subsection{Global Optimal Solution and Proof}
To solve the problem, we firstly construct the Lagrangian from the formulation above,
\begin{align*}
L = -\sum_{ij}{x_{ij}Q_{ij}}+\lambda (\sum_{ij}{x_{ij}q_{j}}-C)+\sum_i{(\mu_i(\sum_j{x_{ij}}-1))} \\
= \sum_{ij}{x_{ij}(-Q_{ij}+\lambda q_j+\mu_i)}-\lambda C-\sum_i{\mu_i} \\
{\rm s.t.} \lambda \ge 0 \\
\mu_i \ge 0 \\
x_{ij} \ge 0 \numberthis
\end{align*}
where we relax the discrete constraint for the indicator $x_{ij}$, we could show that the relaxation does no harm to the optimal solution. From the primal, the dual function \cite{boyd2004convex} is
\begin{align}
    {\rm \max_{\lambda,\mu}}\ {{\rm \min_{x_{ij}}}}(\sum_{ij}{x_{ij}(-Q_{ij}+\lambda q_j+\mu_i)}-\lambda C-\sum_i{\mu_i}) 
\end{align}
With $x_{ij} \ge 0$ ($x_{ij} \le 1$ is implicitly described in the Lagrangian), the linear function is bounded below only when $-Q_{ij}+\lambda q_j+\mu_i\ge0$. And only when $-Q_{ij}+\lambda q_j+\mu_i=0$, the inequality $x_{ij} > 0$ could hold which means $x_{ij} = 1$ in our case (remember that the $x_{i.}$ is an one-hot vector). Formally,
\begin{align*}
{\rm \max_{\lambda,\mu}}(-\lambda C-\sum_i{\mu_i}) \\
{\rm s.t.} -Q_{ij}+\lambda q_j+\mu_i\ge0 \\
\lambda \ge 0 \\
\mu_i \ge 0 \\
x_{ij} \ge 0 \numberthis
\end{align*}
As the dual objective is negatively correlated with $\mu$, the global optimal solution for $\mu$ would be
\begin{align}\label{equ5}
\mu_{i} = {\rm \max_{j}}(Q_{ij}-\lambda q_j)
\end{align}
Hence, the global optimal solution to $x_{ij}$ that indicate which action $j$ could be assigned to request $i$ is
\begin{align}
j = {\rm arg\max_{j}}(Q_{ij}-\lambda q_j) 
\end{align}
From Slater’s theorem \cite{slater1950lagrange}, it can be easily shown that the Strong Duality holds in our case, which means that this solution is also the global optimal solution to the primal problem.

\subsection{Parameter Estimation}
\subsubsection{Lagrange Multiplier}~\\
The analytical form of Lagrange multiplier cannot be easily, or even possibly derived in our case. And meanwhile, the exact global optimal solution in arbitrary case is computationally prohibitive. However, under some general assumptions, simple bisection search could guarantee that the global optimal $\lambda$ could be obtained. Without loss of generality, we reset the indices of action space by following the ascending order of $q_j$'s magnitude.
\begin{assumption}\label{ass1}
$Q_{ij}$ is monotonically increasing with $j$.
\end{assumption}

\begin{assumption}\label{ass2}
$\nicefrac{Q_{ij}}{q_j}$ is monotonically decreasing with $j$.
\end{assumption}

\begin{lemma}\label{lem0}
Suppose Assumptions (\ref{ass1}) and (\ref{ass2}) hold, for each $i$, $\nicefrac{Q_{i{j_1}}}{q_{j_1}} \ge \nicefrac{Q_{i{j_2}}}{q_{j_2}}$ will hold if $\lambda_1 \ge \lambda_2$, where $j_1$ and $j_2$ are the actions that maximize the objective under $\lambda_1$ and $\lambda_2$ respectively.
\end{lemma}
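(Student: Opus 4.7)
The plan is to exploit the optimality characterization of $j_1$ and $j_2$ directly, since from Equation (\ref{equ5}) and the preceding derivation the optimal action for a given $\lambda$ is $\arg\max_j (Q_{ij} - \lambda q_j)$. So my starting point will simply be writing down the two optimality inequalities: $Q_{ij_1} - \lambda_1 q_{j_1} \ge Q_{ij_2} - \lambda_1 q_{j_2}$ and $Q_{ij_2} - \lambda_2 q_{j_2} \ge Q_{ij_1} - \lambda_2 q_{j_1}$.

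Next, I would add these two inequalities. The $Q_{ij_1}$ and $Q_{ij_2}$ terms cancel on each side, and after rearranging I get $(\lambda_1 - \lambda_2)(q_{j_2} - q_{j_1}) \ge 0$. Since $\lambda_1 \ge \lambda_2$ by hypothesis, this forces $q_{j_1} \le q_{j_2}$. Because actions are indexed in ascending order of $q_j$, this translates to $j_1 \le j_2$ (breaking ties consistently when $q$-values coincide, which is harmless for the conclusion).

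Finally, I would invoke Assumption (\ref{ass2}): since $Q_{ij}/q_j$ is monotonically decreasing in $j$, the inequality $j_1 \le j_2$ immediately yields $Q_{ij_1}/q_{j_1} \ge Q_{ij_2}/q_{j_2}$, which is the conclusion. Note that Assumption (\ref{ass1}) is not needed for this particular lemma; it will presumably be used elsewhere in the monotonicity/bisection argument.

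The only mild obstacle I anticipate is the handling of ties, both in the $\arg\max$ defining $j_1, j_2$ and in the $q_j$ sequence. I would address this by fixing a tie-breaking rule (say, the smallest such $j$) so that the implication from $q_{j_1} \le q_{j_2}$ to $j_1 \le j_2$ is unambiguous, and then Assumption (\ref{ass2}) still delivers the desired ratio inequality. Otherwise the argument is a clean two-line revealed-preference-style calculation and I do not expect substantive difficulty.
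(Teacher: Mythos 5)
Your proof is correct, but it takes a genuinely different route from the paper's. The paper argues by contradiction: it first notes that feasibility forces $\nicefrac{Q_{ij}}{q_j}\ge\lambda$, assumes $\nicefrac{Q_{ij_1}}{q_{j_1}}<\nicefrac{Q_{ij_2}}{q_{j_2}}$, and then claims one can construct an alternative action $j_2^*$ with $Q_{ij_2^*}\ge Q_{ij_2}$ and $q_{j_2^*}>q_{j_2}$ whose ratio still clears $\lambda_2$, contradicting the optimality of $j_2$ -- a construction that leans on both Assumptions (\ref{ass1}) and (\ref{ass2}) and is, as written, rather informal (it is not actually shown that $j_2^*$ achieves a strictly larger value of $Q_{ij}-\lambda_2 q_j$ than $j_2$). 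Your exchange argument -- summing the two revealed-preference inequalities $Q_{ij_1}-\lambda_1 q_{j_1}\ge Q_{ij_2}-\lambda_1 q_{j_2}$ and $Q_{ij_2}-\lambda_2 q_{j_2}\ge Q_{ij_1}-\lambda_2 q_{j_1}$ to get $(\lambda_1-\lambda_2)(q_{j_2}-q_{j_1})\ge 0$ -- is the standard monotone-comparative-statics proof, is fully rigorous, and makes transparent that only Assumption (\ref{ass2}) (decrease of the ratio along the $q$-ordering) is needed to convert $q_{j_1}\le q_{j_2}$ into the claimed ratio inequality, whereas the paper invokes both assumptions. Two small points to tighten: (i) when $\lambda_1=\lambda_2$ the summed inequality is vacuous, so you should say explicitly that a consistent tie-breaking rule gives $j_1=j_2$ there; (ii) if distinct actions can share the same $q_j$, the step from $q_{j_1}\le q_{j_2}$ to $j_1\le j_2$ needs either the assumption that costs are strictly increasing in the index or a one-line appeal to Assumption (\ref{ass1}) to handle equal-cost actions. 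Neither issue is substantive.
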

\begin{proof}
As Equation (\ref{equ5}) and $\mu_{i}\ge0$, the inequality $Q_{ij}-\lambda q_j \ge 0$ holds. Equally, $\nicefrac{Q_{ij}}{q_j} \ge \lambda$ holds. Suppose $\nicefrac{Q_{i{j_1}}}{q_{j_1}} < \nicefrac{Q_{i{j_2}}}{q_{j_2}}$, we have $\nicefrac{Q_{i{j_2}}}{q_{j_2}} > \nicefrac{Q_{i{j_1}}}{q_{j_1}} \ge \lambda_1 \ge \lambda_2$. However, we could always find $j_2^*$ such that $Q_{i{j_2^*}} \ge Q_{i{j_2}}$ and  ${q_{j_2^*}} > {q_{j_2}}$ where $\nicefrac{Q_{i{j_1}}}{q_{j_1}} \ge \nicefrac{Q_{i{j_2^*}}}{q_{j_2^*}} \ge \lambda_1 \ge \lambda_2$ such that $Q_{i{j_2^*}} \ge Q_{i{j_2}}$  by following the Assumptions (\ref{ass1}) and (\ref{ass2}). In order words, $j_2$ is not the action that maximize the objective. Therefore, we have $\nicefrac{Q_{i{j_1}}}{q_{j_1}} \ge \nicefrac{Q_{i{j_2}}}{q_{j_2}}$. 
\end{proof}

\begin{lemma}\label{lem1}
Suppose Assumptions (\ref{ass1}) and (\ref{ass2}) could be satisfied, both $max\sum_{ij}{x_{ij}Q_{ij}}$ and its corresponding $\sum_{ij}{x_{ij}q_{j}}$ are monotonically decreasing with $\lambda$.
\end{lemma}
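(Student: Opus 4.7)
The plan is to reduce the claim to a request-by-request comparison and then sum. Fix two multipliers $\lambda_1 \ge \lambda_2$, and for each request $i$ let $j_1(i)$ and $j_2(i)$ denote the actions that solve $\arg\max_j(Q_{ij}-\lambda q_j)$ under $\lambda_1$ and $\lambda_2$ respectively. Lemma~\ref{lem0} already does the heavy lifting: it tells us that $\nicefrac{Q_{i,j_1(i)}}{q_{j_1(i)}} \ge \nicefrac{Q_{i,j_2(i)}}{q_{j_2(i)}}$ for every $i$, so the work that remains is to push this ratio comparison through Assumptions~\ref{ass1} and~\ref{ass2}.

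Next I would translate the ratio inequality into an inequality on the indices themselves. Since by Assumption~\ref{ass2} the ratio $\nicefrac{Q_{ij}}{q_j}$ is monotonically decreasing in $j$, a larger ratio must correspond to a smaller (or equal) index, so $j_1(i) \le j_2(i)$. Applying Assumption~\ref{ass1} then gives $Q_{i,j_1(i)} \le Q_{i,j_2(i)}$, and the convention that indices are arranged in ascending order of $q_j$ gives $q_{j_1(i)} \le q_{j_2(i)}$. Summing each pointwise inequality over $i$ yields
\begin{align*}
\sum_{i} Q_{i,j_1(i)} \le \sum_{i} Q_{i,j_2(i)}, \qquad \sum_{i} q_{j_1(i)} \le \sum_{i} q_{j_2(i)},
\end{align*}
which is precisely the monotonicity in $\lambda$ of the optimal objective and of the corresponding total cost.

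The main (and rather minor) obstacle is the handling of ties. When the ratio is only weakly decreasing in $j$, or when $\arg\max_j(Q_{ij}-\lambda q_j)$ contains several maximizers, the implication ``larger ratio $\Rightarrow$ smaller index'' is ambiguous. The clean way around this is to adopt a tie-breaking rule (for instance, always pick the smallest index among the maximizers): any choice within the argmax set yields the same Lagrangian value, so it does not affect the objective, and the pointwise inequalities above remain valid. Beyond this, the argument is entirely self-contained because the Lagrangian $\sum_i \max_j(Q_{ij}-\lambda q_j) - \lambda C$ decouples across requests once $\lambda$ is fixed, so the optimal $j$ for request $i$ depends only on $\lambda$ and $i$ and the summation step is unproblematic.
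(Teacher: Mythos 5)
Your proof follows the same route as the paper's: invoke Lemma~\ref{lem0} to compare the ratios $\nicefrac{Q_{ij}}{q_j}$ of the actions chosen under $\lambda_1 \ge \lambda_2$, translate that comparison via Assumptions~\ref{ass1} and~\ref{ass2} into $j_1(i) \le j_2(i)$ and hence pointwise inequalities on $Q_{ij}$ and $q_j$, and sum over $i$. You merely spell out the index-comparison and tie-breaking steps that the paper's two-line proof leaves implicit, which makes the argument more complete without changing its substance.
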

\begin{proof}
With $\lambda$ increasing, $\nicefrac{Q_{ij}}{q_j}$ is also increasing monotonically by Lemma (\ref{lem0}). Moreover, by Assumptions (\ref{ass1}) and (\ref{ass2}), we conclude that both $max\sum_{ij}{x_{ij}Q_{ij}}$  and its corresponding $\sum_{ij}{x_{ij}q_{j}}$ are monotonically decreasing with $\lambda$.
\end{proof}

\begin{theorem}
Suppose Lemma (\ref{lem1}) holds, the global optimal Lagrange Multiplier $\lambda$ could be obtained by finding a solution that make $\sum_{ij}{x_{ij}q_{j}} = C$ hold through bisection search.
\end{theorem}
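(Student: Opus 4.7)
The plan is to combine the monotonicity result of Lemma \ref{lem1} with the KKT complementary slackness condition on the budget constraint, and then invoke standard bisection on a monotone one-dimensional function.

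First I would recall that, by the Slater strong duality argument already established in Section 4.1, the global optimum of the primal is attained at a dual-optimal pair $(\lambda^{*},\mu^{*})$ together with the corresponding primal $x^{*}_{ij} = \arg\max_{j}(Q_{ij}-\lambda^{*} q_{j})$. Complementary slackness on the knapsack constraint gives $\lambda^{*}\bigl(\sum_{ij} x^{*}_{ij} q_{j} - C\bigr) = 0$. In the non-degenerate regime where the unconstrained maximum (i.e.\ choosing $j = M$ for every request) already exceeds the budget, we must have $\lambda^{*} > 0$, so the constraint is tight: $\sum_{ij} x^{*}_{ij} q_{j} = C$. This identifies the search target.

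Next I would translate the problem into a one-dimensional root-finding question. Define $g(\lambda) \coloneqq \sum_{i}\max_{j}(Q_{ij}-\lambda q_{j})$'s realizing action's cost $\sum_{i} q_{j^{*}(\lambda,i)}$, i.e.\ the total computation used when, for each request, we select the action that maximizes $Q_{ij}-\lambda q_{j}$. Lemma \ref{lem1} states that $g(\lambda)$ is monotonically non-increasing in $\lambda$. Hence the set $\{\lambda\ge 0 : g(\lambda) = C\}$ is either empty, a single point, or a closed interval. Combined with the observations that $g(0)$ equals the maximal possible cost (well above $C$ in the interesting regime) and that $g(\lambda)\to 0$ as $\lambda\to\infty$, the intermediate-value style argument pins down a $\lambda$ at which the constraint becomes tight, and bisection on $[\,0,\lambda_{\max}\,]$ converges to it because the sign of $g(\lambda)-C$ flips monotonically across the root.

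The main obstacle I anticipate is not the monotonicity itself but the fact that $g(\lambda)$ is a \emph{piecewise-constant, integer-valued} step function: because each request picks exactly one of $M$ discrete actions, $g$ jumps at finitely many threshold values of $\lambda$, and for arbitrary $C$ there may be no $\lambda$ with $g(\lambda)$ exactly equal to $C$. I would handle this by noting that bisection still drives $\lambda$ to the unique jump point $\lambda^{*}$ where $g$ crosses $C$, and that at $\lambda^{*}$ the LP relaxation (which Section 4.1 already appeals to when relaxing $x_{ij}\in\{0,1\}$) admits a convex combination of the two adjacent integer solutions whose cost equals $C$ exactly, so complementary slackness and optimality are preserved. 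Stopping bisection once $|g(\lambda)-C|$ is within a prescribed tolerance then yields an implementable, near-optimal allocation, and by strong duality this $\lambda$ and the induced $x^{*}$ realize the global optimum of the original knapsack formulation.
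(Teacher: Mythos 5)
Your proof is correct and follows essentially the same route as the paper's: Lemma~\ref{lem1}'s monotonicity of the realized cost in $\lambda$ is what makes one-dimensional bisection on the budget residual well-defined, and the $\lambda$ at which the constraint becomes tight is the optimal multiplier. The difference is one of rigor rather than strategy: the paper's proof is a three-sentence appeal to that monotonicity (and even appears to state the direction backwards --- under Lemma~\ref{lem1} the cost \emph{decreases} as $\lambda$ increases, so it is the \emph{decrease} of $\lambda^{*}$ that overloads the budget, not the increase), whereas you supply two pieces it omits. First, the complementary-slackness argument explaining \emph{why} the optimum sits at a tight constraint whenever $\lambda^{*}>0$; the paper simply asserts that the tight-constraint $\lambda$ is optimal. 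Second, the observation that the cost $g(\lambda)$ is a piecewise-constant step function over the discrete actions, so an exact root of $\sum_{ij}x_{ij}q_{j}=C$ may not exist for integer $x_{ij}$; you repair this via the LP relaxation's convex combination at the jump point and a tolerance-based stopping rule, which is exactly the gap the paper papers over with its $\epsilon$ in Algorithm~1 without comment. Both additions address genuine weaknesses in the paper's argument, so your version is strictly more complete while taking the same path.
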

\begin{proof}
By Lemma (\ref{lem1}), this proof is almost trivial. We denote the Lagrange Multiplier that makes $\sum_{ij}{x_{ij}q_{j}} = C$ hold as $\lambda^*$. Obviously, the increase of $\lambda^*$ will result in computation overload and the decrease of $\lambda^*$ will inevitably reduce max$\sum_{ij}{x_{ij}Q_{ij}}$ due to the monotonicity in Lemma (\ref{lem1}). Hence, $\lambda^*$ is the global optimal solution to the constrained maximization problem. Besides, the bisection search must work in this case which is also guaranteed by the monotonicity.
\end{proof}
Assumption (\ref{ass1}) usually holds because the gain is directly proportional to the cost in general,e.g. more sophisticated models usually bring better online performance. For Assumption (\ref{ass2}), it follows the law of diminishing marginal utility \cite{scott1955fishery}, which is an economic phenomenon and reasonable in our constrained dynamic allocation case.

The algorithm for searching Lagrange Multiplier $\lambda$ is described in Algorithm \ref{algo:lagrange}. In general, we implement the bisection search over a pre-defined interval to find out the global optimal solution for $\lambda$. Suppose $\min_j\sum_{j}{q_{j}} \le C \le \max_j \sum_{j}{q_{j}}$ (o.w there is no need for dynamic allocation), it can be easily shown that $\lambda$ locates in the interval $[0, \min_{ij}(\nicefrac{Q_{ij}}{q_{j}})]$. Then we get the global optimal $\lambda$ through bisection search of which target is the solution of $\sum_{ij}{x_{ij}q_{j}} = C$. 
\begin{algorithm}
\caption{\label{algo:lagrange} Calculate Lagrange Multiplier}
\begin{flushleft} 
1: \textbf{Input:}  $Q_{ij}$, $q_j$, $C$, interval $[0, \min_{ij}(\frac{Q_{ij}}{q_{j}})]$ and tolerance $\epsilon$ \\
2: \textbf{Output:} Global optimal solution of Lagrange Multiplier $\lambda$\\
3: Set $\lambda_l = 0$, $\lambda_r = \min_{ij}(\frac{Q_{ij}}{q_{j}})$, $gap = +\infty$ \\
4: \textbf{while} ($gap > \epsilon$): \\
5: \hspace{5mm} $\lambda_m = \lambda_l + \frac{\lambda_r - \lambda_l}{2}$ \\
6: \hspace{5mm} Choose action $j_m^*$ by 
\centerline{$\{j: {\rm arg\max_{j}}(Q_{ij}-\lambda_m q_j), Q_{ij}-\lambda_m q_j \ge 0\}$}\\
7: \hspace{5mm} Calculate the $\sum_i q_{j_l^*}$ denoted by $C_m$ \\
8: \hspace{5mm} $gap = |C_m - C|$ \\
9:  \hspace{5mm} \textbf{if} $gap \le \epsilon$: \\
10: \hspace{1cm} \textbf{return} $\lambda_m$ \\ 
11: \hspace{5mm} \textbf{else if}  $C_m \le C$: \\
12 : \hspace{1cm}  $\lambda_l = \lambda_m$ \\
13 : \hspace{5mm} \textbf{else}: \\
14:   \hspace{1cm}  $\lambda_r = \lambda_m$ \\
15: \textbf{end while} \\
16: \textbf{Return} the global optimal $\lambda_m$ which satisfies $|\sum_i q_{j_l^*} - C| \le \epsilon$.
\end{flushleft} 
\end{algorithm}

For more general cases, more sophisticated method other than bisection search, e.g. reinforcement learning, will be conducted to explore the solution space and find out the global optimal $\lambda$.

\subsubsection{Request Expected Gain}~\\
In e-commerce, the expected gain is usually defined as online performance metric e.g. Effective Cost Per Mile (eCPM), which could directly indicate each individual request value with regard to the platform. 
 Four categories of feature are mainly used: User Profile, User Behavior, Context and System status. It is worth noticing that our features are quite different from typical CTR model:
\begin{itemize}
    \item Specific target ad feature isn't provided because we estimate the CTR conditioned on actions.
    \item System status is included where we intend to establish the connection between system and actions.
    \item The context feature consists of the inference results from previous modules in order to re-utilize the request information efficiently. 
\end{itemize}

\section{Architecture}
\begin{figure}[htb]
\includegraphics[width=\columnwidth] {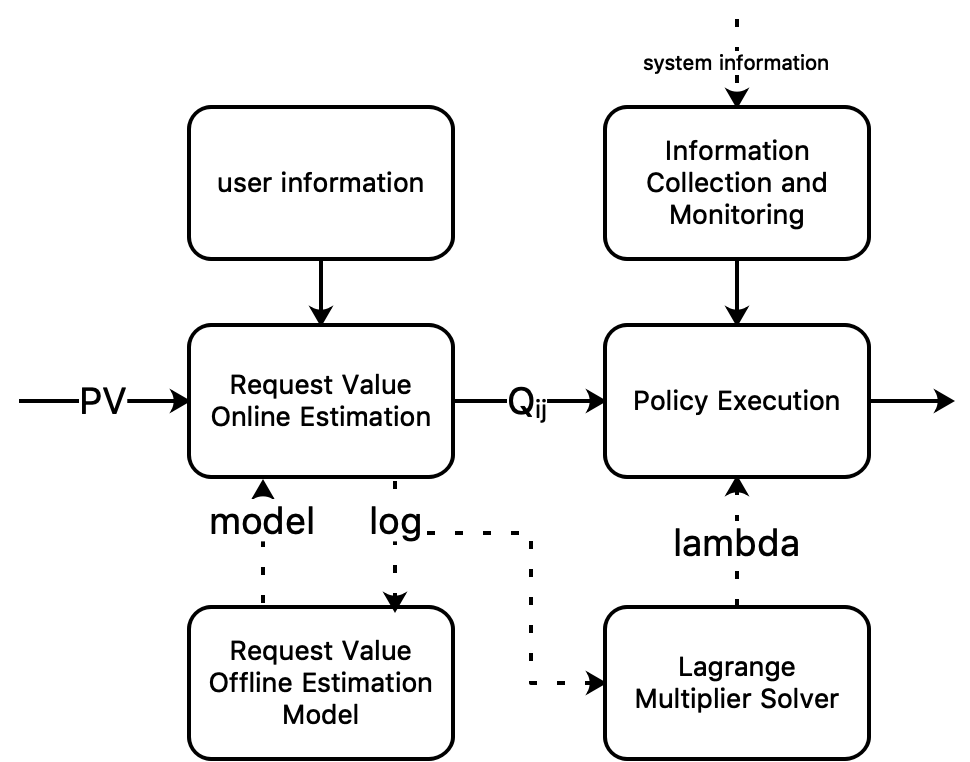}
\caption{\label{fig:framework} Illusion of the system of DCAF. \textit{Request Value Online Estimation} module will score each request conditioned on action $j$ through online features of which estimator is trained offline. \textit{Policy Execution} module mainly takes charge of executing the final action $j$ for each request based on the system status collected by \textit{Information Collection and Monitoring} module, $\lambda$ calculated offline and $Q_{ij}$ obtained from previous module.}
\medskip
\end{figure}
In general, DCAF is comprised of online decision maker and offline estimator:
\begin{itemize}
    \item The online modules make the final decision based on personalized request value and system status.
    \item The offline modules leverage the logs to calculate the Lagrange Multiplier $\lambda$ and train a estimator for the request expected value conditioned on actions based on historical data.
\end{itemize}
\subsection{Online Decision Maker}
\subsubsection{Information Collection and Monitoring}~\\
This module monitors and provides timely information about the system current status which includes GPU-utils, CPU-utils, runtime (RT), failure rate, and etc. The acquired information enables the framework to dynamically allocate the computation resource without exceeding the budget by limiting the action space. 
\subsubsection{Request Value Estimation}~\\
This module estimates the request's $Q_{ij}$ based on the features provided in information collection module. Notably, to avoid growing the system load, the online estimator need to be light-weighted, which necessitates the balance between efficiency and accuracy. One possible solution is that the estimation of $Q_{ij}$ should re-utilize the request context features adequately, e.g. high-level features generated by other models in different modules. 
\subsubsection{Policy Execution}~\\
Basically, this module assigns the best action $j$ to request $i$ by Equation (6). Moreover, for the stability of online system, we put forward a concept called \textit{MaxPower} which is an upper bound for $q_j$  to which each request must subject. DCAF sets a limit on the \textit{MaxPower} in order to strongly control the online engine. The \textit{MaxPower} is automatically controlled by system's runtime and failure rate through control loop feedback mechanism, e.g. Proportional Integral Derivative (PID) \cite{ang2005pid}. The introduction of \textit{MaxPower} guarantees that the system can adjust itself and remain stable automatically and timely when encountering sudden request spikes. \newline
According to the formulation of PID, $u(t)$ and $e(t)$ are the control action and system unstablity at time step $t$. For $e(t)$, we define it as the weighted sum of average runtime and fail rates over a time interval which are denoted by $rt$ and $fr$ respectively. $k_p$, $k_i$ and $k_d$ are the corresponding weights for  proportional,integral and derivative control. ${\theta}$ means a tuned scale factor for the weighted sum of $rt$ and $fr$. Formally,
\begin{align}
u(t)= k_p{e(t)}+k_i{\sum_{n=1}^{t}e(t)}+k_d({e(t)-e(t-1)})
\end{align}

\begin{algorithm}
\caption{PID Control for MaxPower}
\begin{flushleft} 
1: \textbf{Input:} $k_p$, $k_i$, $k_{d}$, $MaxPower$ \\
2: \textbf{Output:} $MaxPower$ \\
3: \textbf{while} (true): \\
4: \hspace{5mm} Obtain $rt$ and $fr$ from \textit{Information Collection and Monitoring} \\
5: \hspace{5mm} ${e(t)={rt}+\theta{fr}}$ \\
6: \hspace{5mm} $u(t)= k_p{e(t)}+k_i{\sum_{n=1}^{t}e(t)}+k_d({e(t)-e(t-1)})$\\
7: \hspace{5mm} Update $MaxPower$ with $u(t)$\\
8: \textbf{end while} \\
\end{flushleft} 
\end{algorithm}

\subsection{Offline Estimator}
\subsubsection{Lagrange Multiplier Solver} ~\\
As mentioned above, we could get the global optimal solution of the Lagrange Multipliers by a simple bisection search method. In real case, we take logs as a request pool to search a best candidate Lagrange Multiplier $\lambda$. Formally,
\begin{itemize}
    \item Sample $N$ records from the logs with $Q_{ij}$, $q_j$ and computation cost $C$, e.g. the total amount of advertisements that request the CTR model within a time interval.
    \item Adjust the computation cost $C$ by the current system status in order to keep the dynamic allocation problem under constraint in time. For example, we denote regular QPS by $QPS_r$ and current QPS by $QPS_c$. Then the adjusted computation cost $\hat{C}$ is $C \times \nicefrac{QPS_r}{QPS_c}$, which could keep the $N$ records under the current computation constraint.
    \item Search the best candidate Lagrange Multiplier $\lambda$ by Algorithm (1)
\end{itemize}
It's worth noting that we actually assume the distribution of the request pool is the same as online requests, which could probably introduce the bias for estimating Lagrange Multiplier. However, in practice, we could partly remove the bias by updating the $\lambda$ frequently.
\subsubsection{Expected Gain Estimator} ~\\
In our settings, for each request,  $Q_{ij}$ is associated with eCPM under different action $j$ which is the common choice for performance metric in the field of online display advertising. Further, we build a CTR model to estimate the CTR, because the eCPM could be decomposed into $ctr \times bid$ where the bids are usually provided by advertisers directly. It is notable that the CTR model is conditioned on actions in our case, where it is essential to evaluate each request gain under different actions. 
And this estimator is updated routinely and provides real-time inference in \textit{Policy Execution} module.

\section{experiments}
\subsection{Offline Experiments}
For validating the framework's correctness and effectiveness, we extensively analyse the real-world logs collected from the display advertising system of Taobao and conduct offline experiments on it. As mentioned above, it is common practice for most systems to ignore the differences in value of requests and execute same procedure on each request. Therefore, we set equally sharing the computation budget among different requests as the \textbf{baseline} strategy. 
As shown in Figure \ref{fig:cascade}, we simulate the performance of DCAF in Ranking stage by offline logs. In advance, we make it clear that all data has been rescaled to avoid breaches of sensitive commercial data. We conduct our offline and online experiments in Taobao's display advertisement system where we spend the GPU resource automatically through DCAF. In detail, we instantiate the dynamic allocation problem as follow:
\begin{itemize}
    \item Action $j$ controls the number of advertisements that need to be evaluated by online CTR model in Ranking stage.
    \item $q_j$ represents the advertisement's quota for requesting the online CTR model.
    \item $Q_{ij}$ is the sum of top-k ad's eCPM for request $i$ conditioned on action $j$ in Ranking stage which is equivalent to online performance closely. And $Q_{ij}$ is estimated in the experiment.
    \item $C$ stands for the total number of advertisements that are requesting online CTR model in a period of time within the serving capacity.
    \item \textbf{Baseline}: The original system, which allocates the same computation resource to different requests. With the baseline strategy, system scores the same number of advertisements in Ranking stage for each request.
\end{itemize}

\begin{figure}[htb]
\includegraphics[width=\columnwidth] {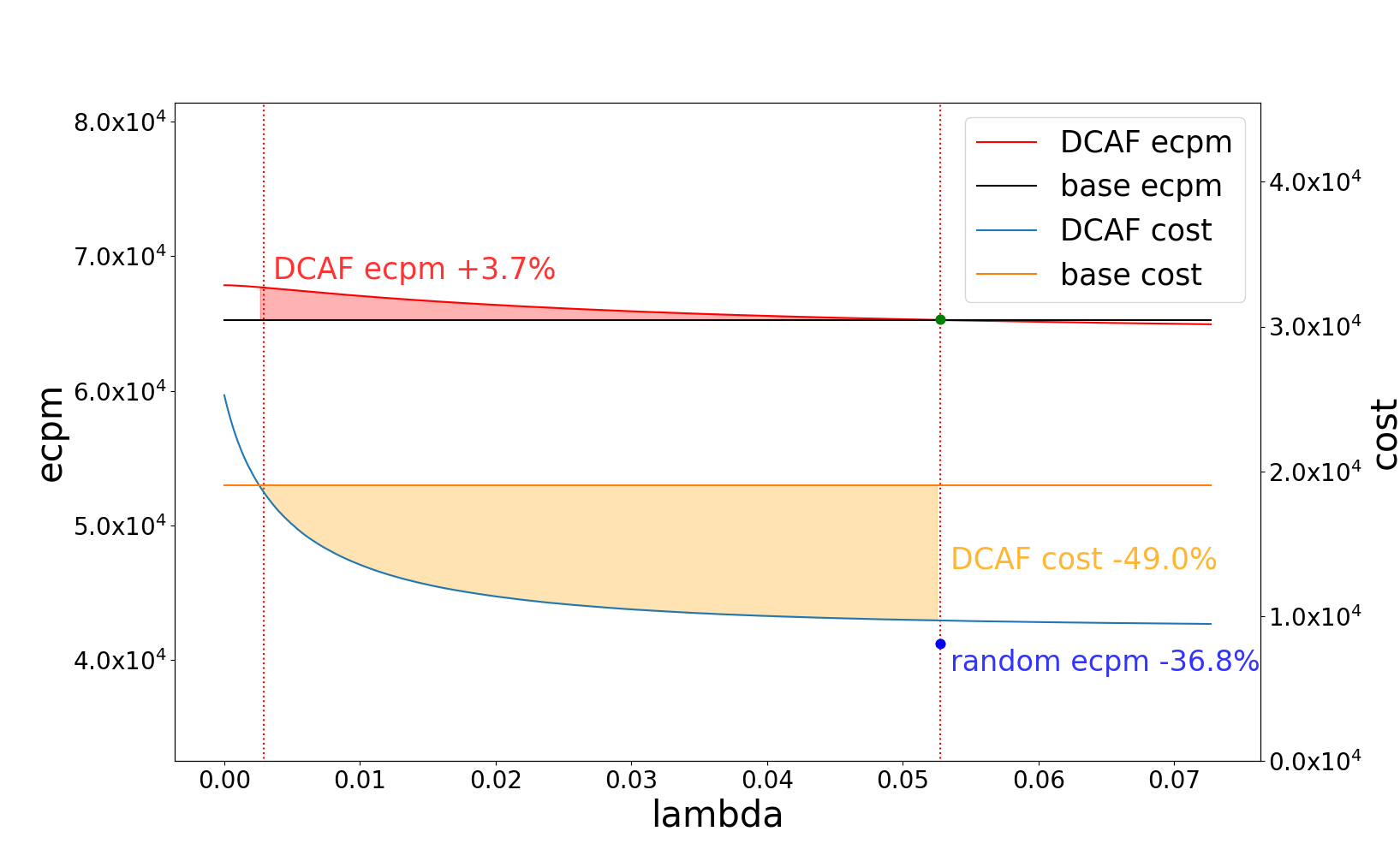}
\caption{\label{fig:exp1} Global optima under different $\lambda$ candidates. In Figure \ref{fig:exp1}, x-axis stands for $\lambda$'s candidate; left y-axis represents $\sum_{ij}{x_{ij}Q_{ij}}$; right-axis denotes the corresponding cost. The red shadow area corresponds to the exceeding part of $max\sum_{ij}{x_{ij}Q_{ij}}$ beyond the baseline. And yellow shadow area is the reduction of $\sum_{ij}{x_{ij}q_{j}}$ under these $\lambda$'s compared with the baseline. Random strategy is also shown in Figure \ref{fig:exp1} for comparison with DCAF.}
\medskip
\end{figure}

\textbf{Global optima under different $\lambda$ candidates. } In DCAF, the Lagrange Multiplier $\lambda$ works by imposing constraint on the computation budget. Figure \ref{fig:exp1} shows the relation among $\lambda$'s magnitude, $max\sum_{ij}{x_{ij}Q_{ij}}$ and its corresponding $\sum_{ij}{x_{ij}q_{j}}$ under fixed budget constraint. Clearly, $\lambda$ could monotonically impact on both  $max\sum_{ij}{x_{ij}Q_{ij}}$ and its corresponding $\sum_{ij}{x_{ij}q_{j}}$. And it shows that the DCAF outperforms the baseline when $\lambda$ locates in an appropriate interval. As demonstrated by the two dotted lines, in comparison with the baseline, DCAF can achieve both higher performance with same computation budget and same performance with much less computation budget. Compared with random strategy, DCAF's performance outmatches the random strategy's to a large extent.   \newline

\begin{figure}[htb]
\includegraphics[width=\columnwidth] {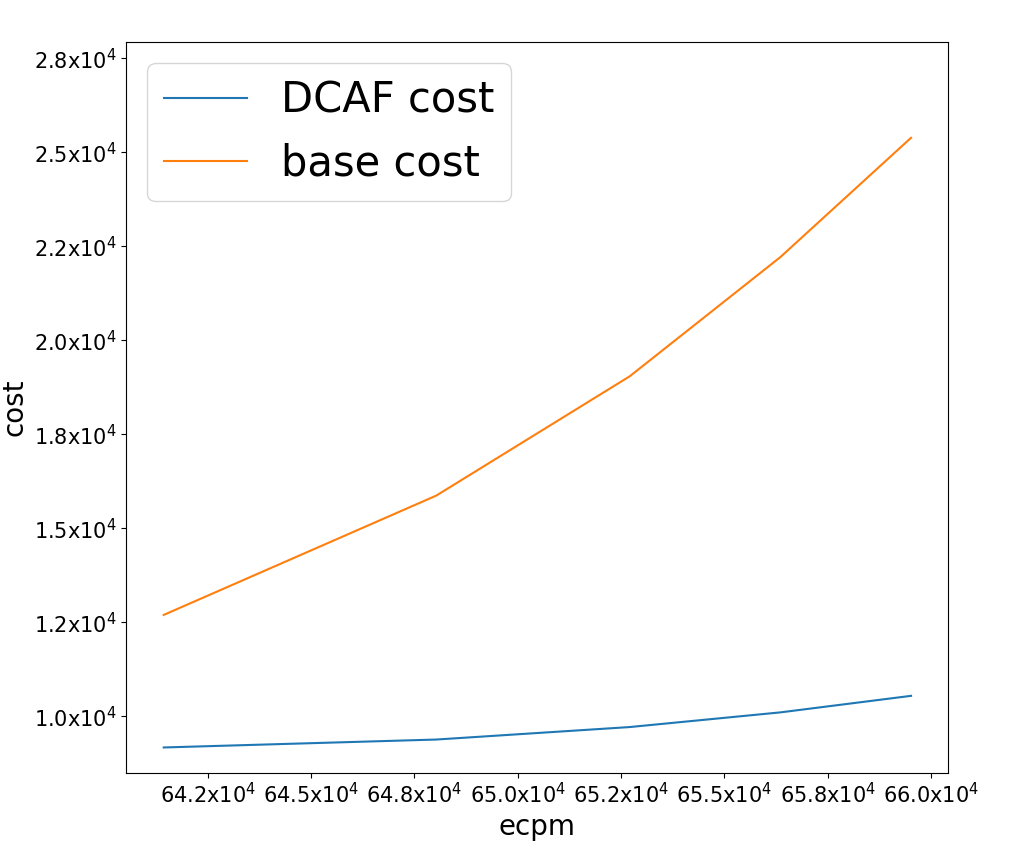}
\caption{\label{fig:exp2} Comparison of DCAF with the original system on computation cost. In Figure \ref{fig:exp2}, x-axis denotes the $\sum_{ij}{x_{ij}Q_{ij}}$; y-axis represents the$\sum_{ij}{x_{ij}q_{j}}$. For points on the two lines with same x-coordinate, Figure \ref{fig:exp2} shows that DCAF always perform as well as the baseline by much less computation resource.}
\medskip
\end{figure}

\textbf{Comparison of DCAF with the original system on computation cost. } Figure \ref{fig:exp2} shows that DCAF consistently accomplish same performance as the baseline and save the cost by a huge margin. Furthermore, DCAF plays much more important role in more resource-constrained systems.

\begin{figure}[htb]
\includegraphics[width=\columnwidth] {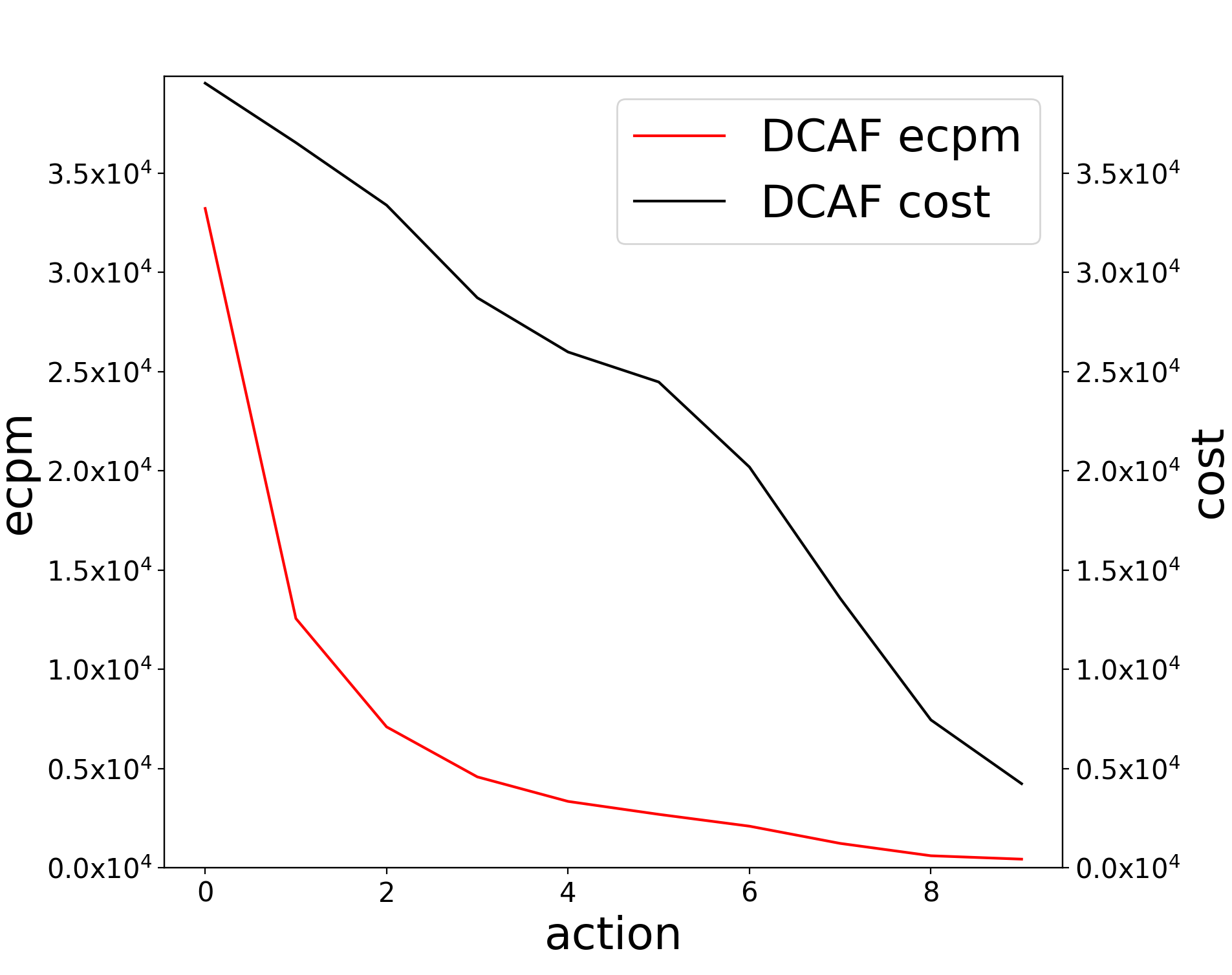}
\caption{\label{fig:exp3} Total eCPM and its cost over different actions. In this figure, x-axis stands for action $j$'s and left y-axis represents $\sum_{ij}{x_{ij}Q_{ij}}$ conditioned on action $j$; right-axis denotes the corresponding cost. For each action $j$, we sum over $Q_{ij}$ which is the the sum of top-k ad's eCPM for requests that are assigned action $j$ by DCAF.}
\medskip
\end{figure}

\textbf{Total eCPM and its cost over different actions. } As shown by the distributions in Figure \ref{fig:exp3}, we could see that DCAF treats each request differently by taking different action $j$. And $\nicefrac{\sum_{ij}Q_{ij}}{\sum_{ij}q_{j}}$ is decreasing with action $j$'s which empirically show that the relation between expected gain and its corresponding cost follows the law of diminishing marginal utility in total.

\subsection{Online Experiments}
DCAF is deployed in Alibaba display advertising system since 2020. From 2020-05-20 to 2020-05-30, we conduct online A/B testing experiment to validate the effectiveness of DCAF. The settings of online experiments are almost identical to offline experiments. Action $j$ controls the number of advertisements for requesting the CTR model in Ranking stage. And we use a simple linear model to estimate the $Q_{ij}$. The original system without DCAF is set as baseline. The DCAF is deployed between Pre-Ranking stage and Ranking stage which is aimed at dynamically allocating the GPU resource consumed by Ranking's CTR model. Table \ref{exp:samecost} shows that DCAF could bring improvement while using the same computation cost. Considering the massive daily traffic of Taobao, we deploy DCAF to reduce the computation cost while not hurting the revenue of the ads system. The results are illustrated in Table \ref{exp:samerpm}, and DCAF reduces the computation cost with respect to the total amount of advertisements requesting CTR model by 25\% and total utilities of GPU resource by 20\%. It should be noticed that, in online system, the $Q_{ij}$ is estimated by a simple linear model which may be not sufficiently complex to fully capture data distribution. Thus the improvement of DCAF in online system is less than the results of offline experiments. This simple method enables us to demonstrate the effectiveness of the overall framework which is our main concern in this paper. In the future, we will dedicate more efforts in modeling $Q_{ij}$. Figure \ref{fig:exp4} shows the performance of DCAF under the pressures of online traffic in extreme case e.g. Double 11 shopping festival. By the control mechanism of \textit{MaxPower}, the online serving system can react to the sudden rising of traffic quickly, and make the system back to normal status by consistently keeping the fail rate and runtime at a low level. It is worth noticing that the control mechanism of \textit{MaxPower} is superior to human interventions in the scenario that the large traffic arrives suddenly and human interventions inevitably delay. 

\begin{table}[!h]
  \caption{Results with Same Computation Budget}
\label{exp:samecost}    
  \begin{tabular}{lcl}
    \toprule
      & CTR & RPM\\
    \midrule
 Baseline & +0.00\% & +0.00\% \\ 
	DCAF & +0.91\% & +0.42\% \\
  \bottomrule
\end{tabular}
\end{table}

\begin{table}[!h]
  \caption{Results with Same Revenue}
\label{exp:samerpm}    
  \begin{tabular}{lcccl}
    \toprule
      & CTR & RPM & Computation Cost & GPU-utils\\
    \midrule
 Baseline & +0.00\% & +0.00\% & -0.00\% & -0.00\%\\ 
	DCAF & -0.57\% & +0.24\% & -25\% & -20\%\\
  \bottomrule
\end{tabular}
\end{table}

\begin{figure}[htb]
\centering
\begin{subfigure}[b]{\columnwidth}
   \includegraphics[width=1\linewidth]{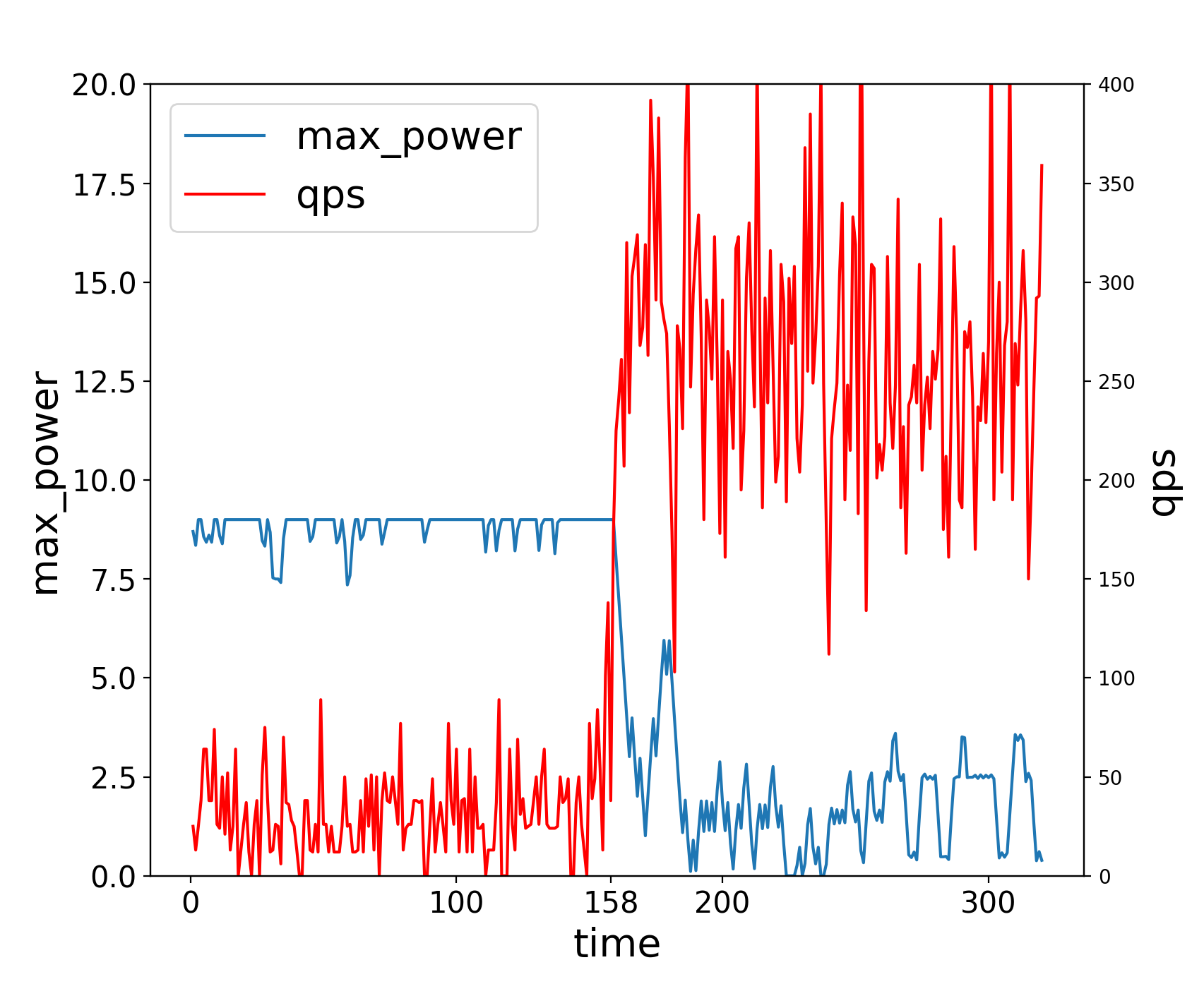}
   \caption{}
   \label{fig:Ng1} 
\end{subfigure}
\begin{subfigure}[b]{\columnwidth}
   \includegraphics[width=1\linewidth]{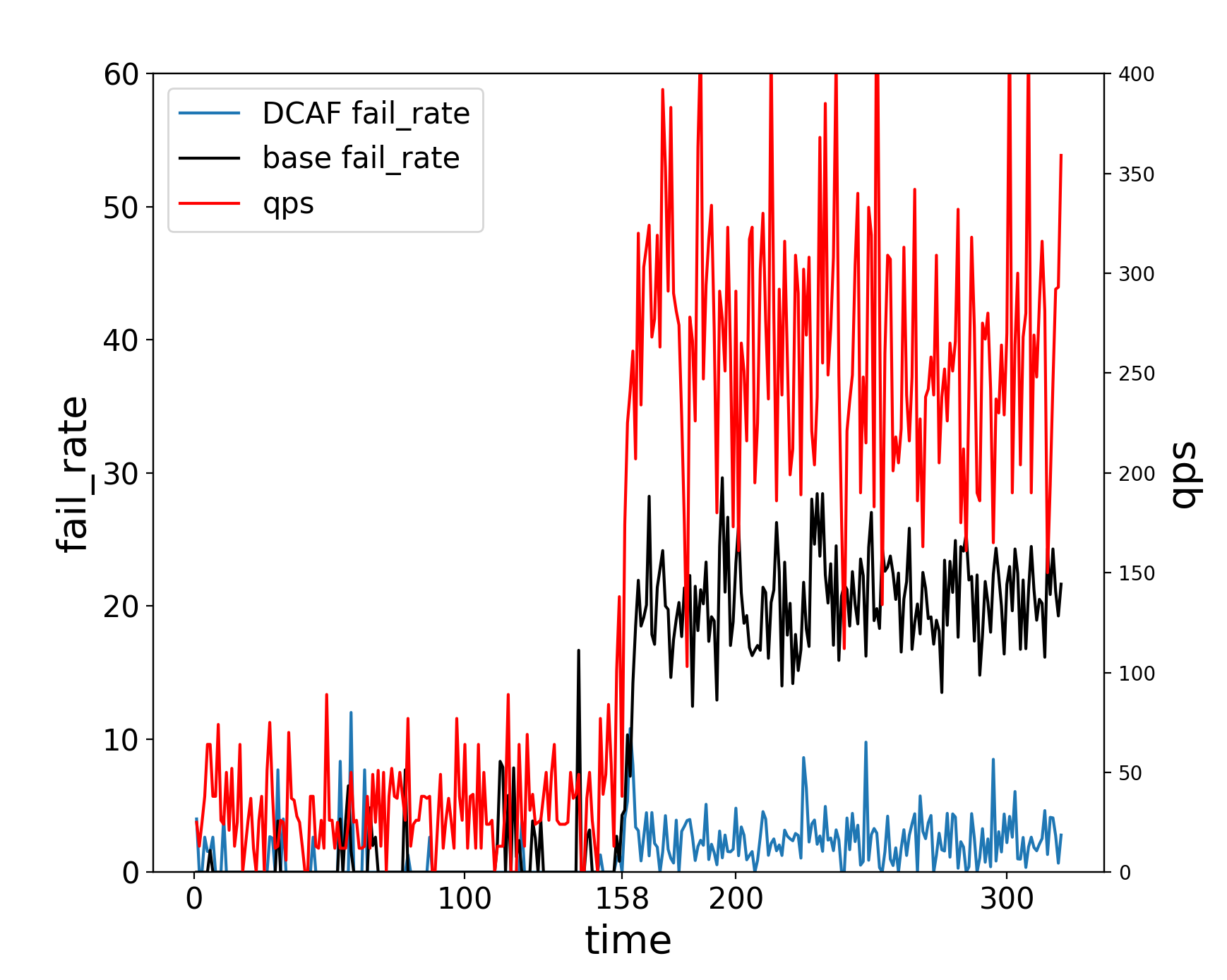}
   \caption{}
   \label{fig:Ng2}
\end{subfigure}
\caption{\label{fig:exp4} The effect of \textit{MaxPower} mechanism. In this experiments, we manually change the traffic of system at time $158$ and the requests per second increase 8-fold. Figure \ref{fig:Ng1} shows the trend of \textit{MaxPower} over time and Figure \ref{fig:Ng2} shows the trend of fail rate over time. As shown in Figure \ref{fig:exp4}, the \textit{MaxPower} takes effect immediately when the QPS is rising suddenly which makes the fail rate keep at a lower level. At the same time, the base strategy fails to serve some requests, because it does not change the computing strategy while the computation power of system is insufficient.}
\medskip
\end{figure}
\section{Conclusion}
In this paper, we propose a noval dynamic computation allocation framework (DCAF), which can break pre-defined quota constraints within different modules in existing cascade system. By deploying DCAF online, we empirically show that DCAF consistently maintains the same performance with great computation resource reduction in online advertising system, and meanwhile, keeps the system stable when facing the sudden spike of requests. Specifically, we formulate the dynamic computational allocation problem as a knapsack problem. Then we theoretically prove that the total revenue can be maximized under a computation budget constraint by properly allocating resource according to the value of individual request. Moreover, under some general assumptions, the global optimal Lagrange Multiplier $\lambda$ can also be obtained which finally completes the constrained optimization problem in theory. Moreover, we put forward a concept called \textit{MaxPower}  which is controlled by a designed control loop feedback mechanism in real-time. Through \textit{MaxPower} which imposes constraints on the range of action candidates, the system could be controlled powerfully and automatically.
\section{Future Work}
Fairness has attracted more and more concerns in the fields of recommendation system and online display advertisements. In this paper we propose DCAF, which allocate the computation resource dynamically among requests. The values of request vary with time, scenario, users and other factors, that incite us to treats each request differently and customize the computation resource for it. But we also noticed that DCAF may discriminate among users. While the allocated computation budgets varying with users, DCAF may leave a impression that it would aggravate the unfairness phenomenon of system further. In our opinion, the unfair problem stems from that all the approaches to model users are data-driven. Meanwhile 
most of systems create a data feedback loop that a system is trained and evaluated on the data impressed to users \cite{chaney2018algorithmic}. We think the fairness of recommender system and ads system is important and needs to be paid more attention to. In the future, we will analyse the long-term effect for fairness of DCAF extensively and include the consideration of it in DCAF carefully. \newline
Besides, DCAF is still in the early stage of development, where modules in the cascade system are considered independently and the action $j$ is defined as the number of candidate to be evaluated in our experiments. Obviously, DCAF could work with diverse actions, such as models with different calculation complexity. Meanwhile, instead of maximizing the total revenue in particular module, DCAF will achieve the global optima in the view of the whole cascade system in the future. Moreover, in the subsequent stages, we will endow DCAF with the abilities of quick adaption and fast reactions. These abilities will enable DCAF to exert its full effect in any scenario immediately.

\section{Acknowledgment}
We thanks Zhenzhong Shen, Chi Zhang for helping us on deep neural net inference optimization and conducting the dynamic resource allocation experiments.

\bibliographystyle{ACM-Reference-Format}
\bibliography{acmart}

\end{document}